\documentclass[10pt,usletter]{article}
\usepackage{natbib}
\usepackage[totalwidth=450pt,totalheight=640pt]{geometry}

\usepackage{times}
\usepackage{helvet}
\usepackage{courier}
\usepackage{balance}

\usepackage{amsthm,amsmath,amssymb,amstext}

\usepackage{tikz}
\usepackage{boxedminipage}
\usepackage{xspace}
\usepackage{url}

\newcommand{\SB}{\{\,}%
\newcommand{\SM}{\;{|}\;}%
\newcommand{\SE}{\,\}}%

\newcommand{\cO}{\mathcal{O}}

\newcommand{\NP}{\text{\normalfont NP}}

\newcommand{\W}[1][xxxx]{\text{\normalfont W}[#1]}

\newcommand{\fpt}{fixed-pa\-ra\-me\-ter trac\-ta\-ble\xspace}

\newcommand{\set}[1]{\left\{ #1 \right\}}
\newcommand{\myiff}{if and only if\xspace}


\newcommand{\dist}{{\normalfont \textsf{dist}}}

\newcommand{\LSVC}{\textsc{LS-Vertex Cover}\xspace}
\newcommand{\PLSVC}{\textsc{pLS-Vertex Cover}\xspace}

\newcommand{\HS}{\textsc{Hall Set}\xspace}

\newcommand{\CLIQUE}{\textsc{Clique}\xspace}

\newcommand{\citex}[1]{\citeauthor{#1}~(\citeyear{#1})}
\newcommand{\citey}[1]{\citeauthor{#1},~\citeyear{#1}}

\newtheorem{definition}{Definition}
 
\newtheorem{theorem}{Theorem}

\newtheorem{lemma}[theorem]{Lemma}

{}
\newtheorem{remark}{Remark}

\newcommand{\pbDefP}[4]{%
\noindent
\begin{center}
\begin{boxedminipage}{0.98 \columnwidth}
#1\\[5pt]
\begin{tabular}{l p{0.70 \columnwidth}}
Input: & #2\\
Parameter: & #3\\
Question: & #4
\end{tabular}
\end{boxedminipage}
\end{center}
}

\newcommand{\pbDefPtask}[4]{%
\noindent
\begin{center}
\begin{boxedminipage}{0.98 \columnwidth}
#1\\[5pt]
\begin{tabular}{l p{0.70 \columnwidth}}
Input: & #2\\
Parameter: & #3\\
Task: & #4
\end{tabular}
\end{boxedminipage}
\end{center}
}

\begin{document}

\title{Don't Be Strict in Local Search!}

\author{
Serge Gaspers \\
The University of New South Wales and\\
Vienna University of Technology\\
{gaspers@kr.tuwien.ac.at} \\
\and
Eun Jung Kim\\
LAMSADE-CNRS, Universit\'e Paris-Dauphine\\
{eunjungkim78@gmail.com} \\
\and
Sebastian Ordyniak \\
Vienna University of Technology \\
{ordyniak@kr.tuwien.ac.at} \\
\and
Saket Saurabh \\
The Institute of Mathematical Sciences\\
{saket@imsc.res.in}\\
\and
Stefan Szeider \\
Vienna University of Technology \\
{stefan@szeider.net}
}
 


\date{}

\nonfrenchspacing

\maketitle

\begin{abstract}
Local Search is one of the fundamental approaches to
combinatorial optimization and it is used throughout AI.
Several local search algorithms are based on searching
the $k$-exchange neighborhood. This is the
set of solutions that can be obtained from the current
solution by exchanging at most $k$ elements.
As a rule of thumb, the larger $k$ is, the better
are the chances of finding an improved solution.
However, for inputs of size $n$, a na\"ive brute-force
search of the $k$-exchange neighborhood requires
$n^{O(k)}$ time, which is not practical even for very
small values of $k$.

Fellows et al.\ (IJCAI 2009) studied whether this brute-force
search is avoidable and gave positive and negative answers
for several combinatorial problems.
They used the notion of local search in a strict sense. That is, 
an improved solution needs to be found in the $k$-exchange
neighborhood even if a global optimum can be found efficiently.

In this paper we consider a natural relaxation of local search,
called \emph{permissive} local search (Marx and Schlotter, IWPEC 2009) and investigate whether
it enhances the domain of tractable inputs.
We exemplify this approach on a fundamental combinatorial problem,
\textsc{Vertex Cover}. More precisely, we show that for a class of inputs,
finding an optimum is hard, strict local search is hard, but permissive local
search is tractable.

We carry out this investigation in the framework of
parameterized complexity.
\end{abstract}


\section{Introduction}

Local search is one of the most common approaches applied in practice
to solve hard optimization problems.  It is used as a subroutine in
several kinds of heuristics, such as evolutionary algorithms and
hybrid heuristics that combine local search and genetic
algorithms. The history of employing local search in combinatorial
optimization and operations research dates back to the 1950s with the
first edge-exchange algorithms for the traveling salesperson~\citep{Bock58,Cores58}.

In general, such algorithms start from a feasible solution and
iteratively try to improve the current solution. Local search
algorithms, also known as neighborhood search algorithms, form a large
class of improvement algorithms. To perform local search, a problem
specific neighborhood distance function is defined on the solution
space and a better solution is searched in the neighborhood of the
current solution. In particular, many local search algorithms are
based on searching the \emph{$k$-exchange neighborhood}. This is the set
of solutions that can be obtained from the current solution by
exchanging at most $k$ elements. 


Most of the literature on local search is primarily devoted to
experimental studies of different heuristics. The theoretical study of
local search has developed mainly in four directions.

The first direction is the study of performance guarantees of local
search, i.e., the quality of the
solution~\citep{Alimonti95,Alimonti97,GuptaTardos00,KhannaMotwaniSudanVazirani98,PapadimitriouSteiglit77}. The
second direction of the theoretical work is on the asymptotic
convergence of local search in probabilistic settings, such as simulated
annealing~\citep{AartsKorstLaarhoven97}. The third direction
concerns the time required to reach a local optimum.  The fourth
direction is concerned with so-called kernelization techniques
\citep{GuoNiedermeier07} for local search, and aims at providing the
basis for putting our theoretical results to work in practice.

In a recent paper by \citex{FellowsRosamondFominLokshtanovSaurabhVillanger09} another twist in the study of local
search has been taken with the goal of answering the following natural question.
Is there a faster way of searching the $k$-exchange neighborhood
than brute-force? This question is important because
the typical running time of a brute-force algorithm is $n^{O(k)}$,
where $n$ is the input length. Such a running time becomes a real obstacle
in using $k$-exchange neighborhoods in practice even for very
small values of $k$. For many years most algorithms
searching an improved solution in the $k$-exchange neighborhood
had an $n^{O(k)}$ running time, creating the impression
that this cannot be done significantly faster than
brute-force search. But is there mathematical evidence for
this common belief? Or is it possible for some problems
to search $k$-exchange neighborhoods in time $O(f(k)n^c)$,
where $c$ is a small constant, which can make local search
much more powerful?

An appropriate tool to answer all these questions is \emph{parameterized
complexity}. In the parameterized complexity framework, for decision
problems with input size $n$, and a parameter $k$, the goal is to
design an algorithm with running time $f(k)n^{\cO(1)}$ , where $f$ is
a function of $k$ alone. Problems having such an algorithm are said to
be \emph{fixed parameter tractable (FPT)}. There is also a theory of hardness
that allows us to identify parameterized problems that are not
amenable to such algorithms. The hardness hierarchy is represented by
$W[i]$ for $i\geq 1$. The theory of parameterized complexity was developed by 
\citex{DowneyFellows99}. For recent developments, see the book by
\citex{FlumGrohe06}.

In this paper we consider two variants of the local search problem for
the well-known \textsc{Vertex Cover} problem, that is, the \emph{strict} and the
\emph{permissive} variant of local
search~(Marx and Schlotter 2011; Krokhin and Marx). 
In the strict variant the task is to either determine that there is no better solution in
the $k$-exchange neighborhood, or to find a better solution in the $k$-exchange neighborhood.
In the permissive variant, however, the task is to either determine that there is no better solution in
the $k$-exchange neighborhood, or to find a better solution, which may
or may not belong to the $k$-exchange neighborhood.
Thus, permissive local search does not require the improved solution to belong to the
local neighborhood, but still requires that at least the local neighborhood
has been searched before abandoning the search.
It can
therefore be seen as a natural relaxation of strict local search with the
potential to make local search applicable to a wider range of problems or instances.
Indeed, we will present a class of instances for \textsc{Vertex Cover} where
strict local search is $\W[1]$-hard, but permissive loal search is FPT.

In heuristic local search, there is an abundance of techniques, such as random restarts and
large neighborhood search, to escape local minima and boost the performance of algorithms \citep{HoosStuetzle04}.
Permissive local search is a specific way to escape the strictness of local search, but
allows a rigorous analysis and performance guarantees.

\vspace{-2pt} 

\paragraph{Relevant results.} 
Recently, the parameterized complexity of local search has gained more
and more attention. Starting with the first breakthrough in this area
by~\citex{Marx08} who investigated the parameterized complexity of TSP,
several positive and negative results have been obtained in many areas
of AI. For instance, the local search problem has already been
investigated for a variant of the feedback edge set
problem~\citep{KhullerBhatiaPless03}, for the problem of finding a
minimum weight assignment for a Boolean constraint satisfaction
instance~(Krokhin and Marx), for the stable marriage problem with
ties~\citep{MarxSchlotter11}, for combinatorial problems on
graphs~\citep{FellowsRosamondFominLokshtanovSaurabhVillanger09}, for
feedback arc set problem on
tournaments~\citep{FominLokshtanovRamanSaurabh10}, for the satisfiability
problem~\citep{Szeider11b}, and for Bayesian network structure
learning~\citep{OrdyniakSzeider10}.

\paragraph{Our results.}
We investigate local search for the fundamental \textsc{Vertex Cover}
problem. This well-known combinatorial optimization problem
has many applications \citep{AbuKhzamCFLSS04,GomesMPV06}
and is closely related to two other classic problems,
\textsc{Independent Set} and \textsc{Clique}.
All our results for \textsc{Vertex Cover} also hold for
the \textsc{Independent Set} problem, and for the \textsc{Clique} problem on
the complement graph classes.
\begin{itemize}
\item We give the first compelling evidence that it is possible to
  enhance the tractability of local search problems if 
  permissive local search is considered instead of strict local search.
  Indeed, the
  permissive variant allows us to solve the local search problem for
  \textsc{Vertex Cover}
  for a significantly larger class of sparse graphs than strict local
  search. 
\item We show that the strict local search \textsc{Vertex Cover} problem
  remains $W[1]$-hard for special sparse instances, improving a result
  from~\citex{FellowsRosamondFominLokshtanovSaurabhVillanger09}. On the
  way to this result we introduce a size-restricted version of a Hall
  set problem which be believe to be interesting in its own right.
\item We answer a question of Krokhin and Marx in the affirmative, who
  asked whether there was a problem where finding the optimum is hard, 
  strict local search is hard, but permissive local search is FPT.
\end{itemize}

\section{Preliminaries}
The \emph{distance} between two sets $S_1$ and $S_2$
is $\dist(S_1,S_2)=|S_1\cup S_2| - |S_1 \cap S_2|$.
We say that $S_1$ is in the \emph{$k$-exchange neighborhood} of $S_2$
if $\dist(S_1,S_2)\le k$.
If we consider a universe $V$ with $S_1,S_2\subseteq V$, the characteristic
functions of $S_1$ and $S_2$ with respect to $V$ are at Hamming distance
at most $k$ if $\dist(S_1,S_2)\le k$.

All graphs considered in this paper are finite, undirected, and simple.
Let $G=(V,E)$ be a graph, $S\subseteq V$ be a vertex set, and $u,v\in V$ be vertices.
The \emph{distance} $\dist(u,v)$ between $u$ and $v$ is the minimum number of
edges on a path from $u$ to $v$ in $G$.
The \emph{(open) neighborhood} of $v$ is $N(v)= \SB u\in V \SM  uv\in E \SE$,
i.e., the vertices at distance one from $v$,
and its \emph{closed neighborhood} is $N[v]=N(v)\cup \set{v}$.
We also define $N(S) = \bigcup_{u\in S} N(u) \setminus S$ and
$N[S] = N(S) \cup S$.
More generally, $N^d(S)$ and $N^d[S]$ denote the set of vertices
at distance $d$ and at distance at most $d$ from a vertex in $S$,
respectively. We write $N^d(v)$ and $N^d[v]$ for $N^d(\set{v})$ and $N^d[\set{v}]$, respectively.
The \emph{degree} of $v$ is $d(v)=|N(v)|$.
These notations may be subscripted by $G$, especially if the graph is not clear from the context.

The graph $G\setminus S$ is obtained from $G$ by removing all vertices
in $S$ and all edges incident to vertices in $S$.
The subgraph of $G$ \emph{induced} by $S$ is $G\setminus (V\setminus S)$ and it is
denoted $G[S]$.
The set $S$ is a \emph{vertex cover} of $G$ if $G\setminus S$ has no edge.
The set $S$ is an \emph{independent set} of $G$ if $G[S]$ has no edge.
The graph $G$ is \emph{bipartite} if its vertex set can be partitioned into two
independent sets $A$ and $B$. In this case, we also denote the graph by
a triple $G=(A,B,E)$.

The instances considered in this paper are $d$-degenerate graphs.

The \emph{degeneracy} of $G$ is the minimum $d$ such that every subgraph of $G$ has a vertex of degree
at most $d$.
Degeneracy is a fundamental sparsity measure of graphs.
A graph $G'$ is obtained from $G$ by \emph{subdividing} an edge $xy\in E$ if $G'$ is obtained
by removing the edge $xy$, and adding a new vertex $z_{xy}$ and edges $x z_{xy}$ and
$z_{xy} y$.
A graph $G'$ is obtained from $G$ by \emph{subdividing} an edge $xy\in E$ \emph{twice} if $G'$ is obtained
by removing the edge $xy$, and adding new vertices $z_{xy}$ and $z'_{xy}$ and edges $x z_{xy}$, $z_{xy} z'_{xy}$ and
$z'_{xy} y$.
The graph $G$ is \emph{$2$-subdivided} if $G$ can be obtained from a graph $G'$ by subdividing
each edge of $G'$ twice.


\section{Hardness proofs}


In this section we show that strict local search for \textsc{Vertex Cover} is  \W[1]-hard on 2-subdivided graphs.

 \pbDefP{\LSVC}
 {A graph $G=(V,E)$, a vertex cover $S\subseteq V$ of $G$, and an integer $k$.}
 {The integer $k$.}
 {Is there a vertex cover $S' \subseteq V$ in the $k$-exchange neighborhood of $S$ with $|S'|<|S|$?}



Our proof will strengthen the following result of 
\citex{FellowsRosamondFominLokshtanovSaurabhVillanger09}.

\begin{theorem}[\citey{FellowsRosamondFominLokshtanovSaurabhVillanger09}]
 \LSVC is {\rm \W[1]}-hard and remains {\rm \W[1]}-hard when restricted to 3-degenerate graphs.
\end{theorem}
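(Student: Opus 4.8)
The plan is to reduce everything to a purely combinatorial statement about improving local-search moves, phrased via vertex sets that violate Hall's condition, and then to reduce a clique problem to that.

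First I would prove the following characterization. Suppose the input vertex cover $S$ is an independent set, so $G$ is bipartite with parts $S$ and $I:=V\setminus S$. Then $G$ has a vertex cover $S'$ with $\Card{S'}<\Card{S}$ and $\dist(S,S')\le k$ \myiff there is a nonempty $X\subseteq S$ with $\Card{N(X)}<\Card{X}$ and $\Card{X}+\Card{N(X)}\le k$. For ``$\Leftarrow$'' take $S':=(S\setminus X)\cup N(X)$: its complement $(I\setminus N(X))\cup X$ is independent since $S$ and $I$ are, and $N(X)$ catches every $I$-neighbour of $X$; hence $S'$ is a vertex cover, it is smaller because $\Card{N(X)}<\Card{X}$, and $\dist(S,S')=\Card{X}+\Card{N(X)}\le k$. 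For ``$\Rightarrow$'' let $X:=S\setminus S'$ and $A:=S'\setminus S\subseteq I$; since $V\setminus S'$ is independent and $N(X)\subseteq I$ (as $S$ is independent), we get $N(X)\subseteq A$, so $\Card{N(X)}\le\Card{A}<\Card{X}$ (using $\Card{S'}<\Card{S}$) and $\Card{X}+\Card{N(X)}\le\Card{X}+\Card{A}=\dist(S,S')\le k$. Consequently, on bipartite inputs with $S$ a part of the bipartition, \LSVC \emph{coincides} with the following problem, which I would isolate as a tool: given a bipartite graph $(A,B,E)$ and an integer $k$, is there a nonempty $X\subseteq A$ with $\Card{N(X)}<\Card{X}$ and $\Card{X\cup N(X)}\le k$? (parameter $k$). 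Call this the size-restricted \HS problem; note it lies in \XP since $\Card{X}\le k$.

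The core step is to show this size-restricted \HS problem is \W[1]-hard, by a reduction from \MCC with colour classes $V_1,\dots,V_\kappa$. The idea is to let the left side $A$ consist of a ``vertex selector'' $x_v$ for every vertex $v$ and an ``edge selector'' $y_e$ for every edge $e$ between two distinct colour classes, and to wire these to a bounded-degree right side so that (i) a deficient set of small total size is forced to consist of exactly one $x_v$ per colour and one $y_e$ per colour pair --- enforced by a small ``counter'' gadget attached to each colour class and each colour pair --- and (ii) such a set can actually realise its single unit of deficiency only when every chosen edge selector $y_{uv}$ is incident, through an ``incidence'' gadget, to the chosen vertex selectors $x_u$ and $x_v$; this makes the chosen vertices a clique. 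The target value $k$ is then a function of $\kappa$ alone. Crucially, by routing each would-be high-degree right vertex through a short chain of auxiliary low-degree vertices (and re-tuning the counters accordingly), every right-hand vertex can be kept of degree at most $3$, and a bipartite graph in which one side has maximum degree $3$ is $3$-degenerate: any subgraph containing a right vertex has a vertex of degree $\le 3$, and any subgraph avoiding the right side is edgeless.

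Putting the two steps together: an instance output by this reduction is, by the first paragraph, an \LSVC instance $(G,S,k)$ with $G$ bipartite and $3$-degenerate and $S$ one side of the bipartition; it is a yes-instance of \LSVC \myiff the \MCC instance has a clique. Since \MCC is \W[1]-hard, so is \LSVC, even on $3$-degenerate graphs. (To obtain the stronger restriction to $2$-subdivided graphs that the present paper proves, one would not use the bipartite graph directly but re-realise the same gadgets inside a $2$-subdivided graph, taking $S$ to be the set of subdivision vertices; since every $2$-subdivided graph is $2$-degenerate, that version subsumes the $3$-degenerate one. In that variant $G[S]$ is a perfect matching rather than edgeless, but the analogue of the first-paragraph characterization still applies once one notes that the removed set $R$ merely has to be independent in $G[S]$.)

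I expect the main obstacle to be the design of the size-restricted \HS gadget, which must be simultaneously \emph{tight} --- so that every cheap deficient set is forced to be a genuine clique encoding and cannot exploit unintended local deficiency --- and \emph{sparse} --- so that no right vertex accumulates large degree. These requirements pull against each other, and making the counter and incidence gadgets, together with the exact value of $k$, line up is the delicate part of the argument; the local-search characterization itself and the final book-keeping for degeneracy are routine.
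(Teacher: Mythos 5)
Your first paragraph --- the translation of an improving $k$-exchange move for a vertex cover $S$ into a ``deficient'' set $X\subseteq S$ with $\Card{N(X)}<\Card{X}$ and $\Card{X}+\Card{N(X)}\le k$ --- is correct and is exactly the bridge the paper uses (it is the content of its structural lemma and of the reduction from the \HS problem to \LSVC). The genuine gap is in your core step. The entire difficulty of the theorem lives in proving that the size-restricted \HS problem is \W[1]-hard on sparse instances, and there you give only a plan: the ``counter'' and ``incidence'' gadgets from \MCC are never constructed, the target value $k$ is never specified, and no soundness argument is given that a cheap deficient set must decompose into one vertex selector per colour and one edge selector per colour pair rather than exploiting some unintended local deficiency. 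You yourself flag this as the delicate part where the requirements ``pull against each other,'' which is an accurate self-assessment: as written, the proof of the only nontrivial claim is missing. A secondary concern is your degree-reduction step: replacing a high-degree right vertex by ``a short chain of auxiliary low-degree vertices'' changes the Hall-deficiency arithmetic (every auxiliary vertex placed on the left contributes to $\Card{X}$, and its neighbours to $\Card{N(X)}$), and re-tuning the counters to absorb this is precisely the kind of bookkeeping that cannot be waved through --- it is what the paper's analogous argument spends most of its effort on.

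For contrast, the route the paper takes avoids gadgeteering altogether. It reduces from plain \CLIQUE: subdivide every edge $e$ of $G$ by a vertex $v_e$, add $t=\binom{k}{2}-k-1$ dummy vertices adjacent to all the $v_e$, and ask for a Hall set of size $k'=\binom{k}{2}$ on the side $A=\SB v_e\SM e\in E\SE$. The two inequalities $\Card{X}>t+\Card{V\cap N(X)}$ and $\Card{X}\le\binom{\Card{V\cap N(X)}}{2}$ then force $\Card{X}=\binom{k}{2}$ and $\Card{V\cap N(X)}=k$, i.e.\ a clique --- no counters or incidence gadgets are needed. Sparsity is obtained not by bounding degrees (the dummies have huge degree) but by $2$-subdividing every edge of the construction, since a $2$-subdivided graph is $2$-degenerate no matter what its degrees are; the resulting recount of $\Card{X}$ and $\Card{N(X)}$ (a factor $3+t$ and a normalization rule ensuring each gadget is taken wholly or not at all) is carried out explicitly. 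If you want to salvage your \MCC approach you would need to actually exhibit the gadgets and verify the deficiency accounting; alternatively, adopting the padding-plus-subdivision idea lets you bypass the part of your argument that is currently absent.
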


As 2-subdivided graphs are 2-degenerate, our result implies that \LSVC is \W[1]-hard when restricted to
2-degenerate graphs as well.

We first show that the following intermediate problem is \W[1]-hard for 2-subdivided graphs.

 \pbDefP{\HS}
 {A bipartite graph $G=(A,B,E)$ and an integer $k$.}
 {The integer $k$}
 {Is there a set $S \subseteq A$ of size at most $k$ such that $|N(S)|<|S|$?}

As \HS is a very natural problem related to matching theory,
and to give an intuition for the \W[1]-hardness proof for \HS restricted to 2-subdivided graphs,
we first show that \HS is \W[1]-hard on general graphs.

\tikzset{vertex/.style={inner sep=.12em,circle,fill=black,draw},
         label distance=-2pt}

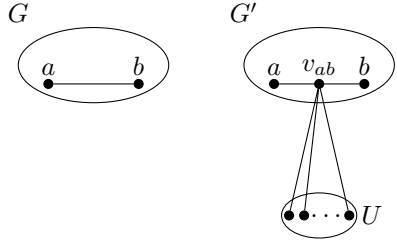
\begin{figure}[tb]
  \centering
  \begin{tikzpicture}
    \draw (0,0) ellipse (1cm and .5cm);
    \node at (-1,0.7) {$G$};
    \node (a) at (-0.6,-0.25) [vertex,label=above:$a$] {};
    \node (b) at (0.6,-0.25) [vertex,label=above:$b$] {};
    \draw (a)--(b);
    
    \draw (3,0) ellipse (1cm and .5cm);
    \node at (2,0.7) {$G'$};
    \node (ap) at (2.4,-0.25) [vertex,label=above:$a$] {};
    \node (m) at (3,-0.25) [vertex,label=above:$v_{ab}$] {};
    \node (bp) at (3.6,-0.25) [vertex,label=above:$b$] {};
    \draw (ap)--(bp);
    
    \draw (3,-2) ellipse (.5cm and .3cm);
    \node at (3.7,-2) {$U$};
    \node (u1) at (2.6,-2) [vertex] {};
    \node (u2) at (2.8,-2) [vertex] {};
    \node at (3.12,-2) {$\dots$};
    \node (u3) at (3.4,-2) [vertex] {};
    \draw (u1)--(m)--(u2) (m)--(u3);
  \end{tikzpicture}
  \caption{\label{fig:h1} Reduction from Lemma \ref{lem:h1} illustrated for one edge of $G$.}
\end{figure}

\begin{lemma}\label{lem:h1}
 \HS is {\rm \W[1]}-hard.
\end{lemma}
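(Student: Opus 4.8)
The plan is to give a parameterized reduction from \CLIQUE, which is \W[1]-hard and remains so if one restricts to clique size $\ell \ge 4$. Given $(G,\ell)$, I would build the bipartite incidence graph of $G$ padded by a ``blob'' of dummy vertices: take a vertex $a_e$ for every edge $e\in E(G)$ to form one side $A$; take a fresh set $U$ of exactly $\binom{\ell}{2}-\ell-1$ vertices; put $B=V(G)\cup U$; and make $a_e$ adjacent to the two endpoints of $e$ and to every vertex of $U$. Pictorially this is $G$ with each edge subdivided once, the subdivision vertices forming $A$, all joined to a common independent set $U$, as in Figure~\ref{fig:h1}. The output is the \HS instance $(G',k)$ with $G'=(A,B,E')$ as above and $k=\binom{\ell}{2}$; since $|U|$ and $k$ depend on $\ell$ alone, this is an fpt reduction.

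For correctness, note first that a set $S\subseteq A$ is simply a copy of a set $F\subseteq E(G)$ of edges, and $N(S)=V(F)\cup U$ where $V(F)$ is the set of endpoints of edges of $F$; since $V(F)\subseteq V(G)$ is disjoint from $U$, we get $|N(S)|=|V(F)|+|U|$. If $G$ has an $\ell$-clique $K$, then $S=\{\,a_e : e\subseteq K\,\}$ has $|S|=\binom{\ell}{2}=k$ and $|N(S)|=\ell+|U|=\binom{\ell}{2}-1<|S|$, so $S$ is a Hall set of the required size. Conversely, let $S$ be any Hall set with $|S|\le k$, let $F$ be the corresponding edge set and $t=|V(F)|$. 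Then $t+|U|=|N(S)|<|S|=|F|$, and using $|F|\le\binom{t}{2}$ and $|F|\le k=\binom{\ell}{2}$ one first gets $t\le\ell$; because $|U|=\binom{\ell}{2}-\ell-1$ and $t\mapsto\binom{t}{2}-t$ is strictly increasing, the case $t\le\ell-1$ is impossible, so $t=\ell$; and finally $\binom{\ell}{2}-1<|F|\le\binom{t}{2}=\binom{\ell}{2}$ forces $F$ to contain \emph{all} $\binom{\ell}{2}$ pairs inside $V(F)$, i.e.\ $V(F)$ induces an $\ell$-clique of $G$.

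The step I expect to require the most care is exactly this calibration of $|U|$: its size has to be tuned so that, within the budget $k$, a family $F$ of edges with $|V(F)|+|U|<|F|$ can \emph{only} be the edge set of a clique on precisely $\ell$ vertices. If $|U|$ were slightly smaller, spurious non-clique Hall sets coming from denser subgraphs of $G$ (for instance a $K_{\ell-1}$, or a near-clique with extra pendant structure) would creep in and break the ``only if'' direction, whereas if $|U|$ were slightly larger the intended clique-solution would itself cease to be a Hall set; the two constraints meet only at $|U|=\binom{\ell}{2}-\ell-1$. Everything else is routine bookkeeping --- checking that $G'$ is bipartite with $A$ an independent set, and that the construction runs in polynomial time. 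One could equally reduce from \MCC, which yields a more rigid gadget and is the more convenient starting point when one later specializes the argument to sparse ($2$-subdivided) graphs.
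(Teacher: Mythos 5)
Your reduction is exactly the paper's: subdivide each edge of the \CLIQUE instance, attach all subdivision vertices to a dummy set $U$ of size $\binom{\ell}{2}-\ell-1$, set the budget to $\binom{\ell}{2}$, and recover the clique from the counting inequality $|V(F)|+|U|<|F|\le\binom{|V(F)|}{2}$. The correctness argument matches the paper's (your explicit case split on $t$ and the $\ell\ge 4$ proviso ensuring $|U|\ge 0$ are just slightly more careful bookkeeping), so the proposal is correct and essentially identical in approach.
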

\begin{proof}
 We prove the lemma by a parameterized reduction from \CLIQUE, which is \W[1]-hard \citep{DowneyFellows99}.

 \pbDefP{\CLIQUE}
 {A graph $G$ and an integer $k$.}
 {The integer $k$.}
 {Does $G$ have a clique of size $k$?}

Let $(G,k)$ be an instance for \CLIQUE. We construct an instance $(G',k')$ for \HS as follows.
Set $k' := \binom{k}{2}$.
Subdivide each edge $e$ of $G$ by a new vertex $v_e$, then add a set of $t$ new vertices $U=\{u_1, \dots, u_t\}$,
with $t := k'-k-1$, and add an edge $v_eu$ for each $e\in E$ and $u\in
U$. Set $A:=\SB v_e \SM e\in E \SE$ and $B:=V\cup U$.

Suppose $G$ has a clique $C$ of size $k$. Consider the set $S := \SB
v_e\in A \SM e \subseteq C\SE$, i.e., the set of vertices introduced
in $G'$ to subdivide the edges of $C$. Then, $|S| = \binom{k}{2} = k'$. Moreover, $|N(S)| = |C \cup U| = k+t = k'-1$. Thus, $S$ is a
Hall set of size $k'$.

On the other hand, suppose $S \subseteq A$ is a Hall Set of size at most $k'$. As $S \neq \emptyset$, we have that $U \subseteq N(S)$.
Since each vertex from $S$ has two neighbors in $V$, we have $|S| \le \binom{|V\cap N(S)|}{2}$.
From $|S|>t+|V\cap N(S)|$ it follows that $|S|-\binom{k}{2}+k+1 > |V\cap N(S)|$, which can only be achieved if
$|S|=\binom{k}{2}$ and $|V\cap N(S)|=k$.
But then, $V\cap N(S)$ is a clique of size $k$ in $G$.
\end{proof}

We now generalize the above proof and reduce \CLIQUE to \HS restricted to
2-subdivided graphs.

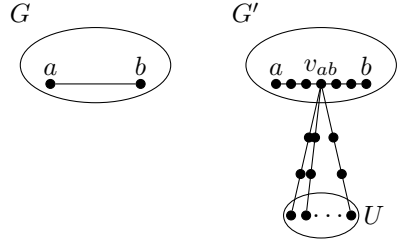
\begin{figure}[tb]
  \centering
  \begin{tikzpicture}
    \draw (0,0) ellipse (1cm and .5cm);
    \node at (-1,0.7) {$G$};
    \node (a) at (-0.6,-0.25) [vertex,label=above:$a$] {};
    \node (b) at (0.6,-0.25) [vertex,label=above:$b$] {};
    \draw (a)--(b);
    
    \draw (3,0) ellipse (1cm and .5cm);
    \node at (2,0.7) {$G'$};
    \node (ap) at (2.4,-0.25) [vertex,label=above:$a$] {};
    \node at (2.6,-0.25) [vertex] {};
    \node at (2.8,-0.25) [vertex] {};
    \node (m) at (3,-0.25) [vertex,label=above:$v_{ab}$] {};
    \node at (3.2,-0.25) [vertex] {};
    \node at (3.4,-0.25) [vertex] {};
    \node (bp) at (3.6,-0.25) [vertex,label=above:$b$] {};
    \draw (ap)--(bp);
    
    \draw (3,-2) ellipse (.5cm and .3cm);
    \node at (3.7,-2) {$U$};
    \node (u1) at (2.6,-2) [vertex] {};
    \node (u2) at (2.8,-2) [vertex] {};
    \node at (3.12,-2) {$\dots$};
    \node (u3) at (3.4,-2) [vertex] {};
    \draw (u1)--(m) node[pos=0.6,vertex] {} node[pos=0.3,vertex] {};
    \draw (u2)--(m) node[pos=0.6,vertex] {} node[pos=0.3,vertex] {};
    \draw (u3)--(m) node[pos=0.6,vertex] {} node[pos=0.3,vertex] {};
  \end{tikzpicture}
  \caption{\label{fig:h2} Reduction from Lemma \ref{lem:h2} illustrated for one edge of $G$.}
\end{figure}

\begin{lemma}\label{lem:h2}
 \HS is {\rm \W[1]}-hard even if restricted to $2$-subdivided graphs. 
\end{lemma}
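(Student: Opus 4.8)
The plan is to give a parameterized reduction from \CLIQUE that follows the proof of Lemma~\ref{lem:h1} but then passes to a $2$-subdivision. Let $(G,k)$ be an instance of \CLIQUE; we may assume $k\ge 4$, since \CLIQUE is polynomial-time solvable for $k\le 3$ and we can output a fixed yes- or no-instance otherwise. Put $t:=\binom{k}{2}-k-1\ge 0$ and first form the graph $H$ of Lemma~\ref{lem:h1}: subdivide each edge $e=ab$ of $G$ by a new vertex $v_e$ (creating the edges $v_ea$ and $v_eb$), add a set $U$ of $t$ new vertices, and join every $v_e$ to every vertex of $U$. Then $H$ is bipartite with sides $A_H=\SB v_e\SM e\in E(G)\SE$ and $B_H=V(G)\cup U$. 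Let $G^\ast$ be obtained from $H$ by subdividing every edge twice. By construction $G^\ast$ is $2$-subdivided, and it is bipartite because subdividing an edge twice replaces it by a path of length three, so a proper $2$-colouring of $H$ extends to one of $G^\ast$; let $A$ be the colour class of $G^\ast$ containing $A_H$. Besides the $v_e$, this class contains exactly one subdivision vertex per edge of $H$, namely the one adjacent to the $B_H$-endpoint of that edge; call it the \emph{partner} of $v_e$ along that edge, so each $v_e$ has $t+2$ partners. Set $k^\ast:=(t+3)\binom{k}{2}$. Since $t$ is a function of $k$ and the construction is polynomial, it remains to show that $(G,k)$ is a yes-instance of \CLIQUE iff $(G^\ast,k^\ast)$ is a yes-instance of \HS with respect to the side $A$.

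For the forward direction, given a $k$-clique $C$, let $T$ consist of all $v_e$ with $e\subseteq C$ together with, for each such $v_e$, its $t+2$ partners. Then $T\subseteq A$ and $|T|=(t+3)\binom{k}{2}=k^\ast$, while $N(T)$ consists of the $(t+2)\binom{k}{2}$ subdivision vertices adjacent to the chosen $v_e$, the $k$ vertices of $C$, and the $t$ vertices of $U$; hence $|N(T)|=(t+3)\binom{k}{2}-1<|T|$, using $\binom{k}{2}=k+t+1$, so $T$ is a Hall set.

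For the converse, given a Hall set $T\subseteq A$ with $|T|\le k^\ast$, I would first normalise it. If some $v_e\in T$ has a partner $x\notin T$, then the remaining subdivision vertex $y$ on the $H$-edge through $x$ has only the two neighbours $v_e$ and $x$, so deleting $v_e$ from $T$ removes one element from $T$ and at least one element ($y$) from $N(T)$, which does not decrease the deficiency $|T|-|N(T)|$; symmetrically, a partner in $T$ whose $v_e$ is not in $T$ may be removed without decreasing the deficiency. Iterating these deletions until none applies, we obtain a Hall set $T=\bigcup_{v_e\in S}\bigl(\{v_e\}\cup\{\text{partners of }v_e\}\bigr)$, where $S$ is the set of $v_e$ in $T$; in particular $|T|=(t+3)|S|$, so $|S|\le\binom{k}{2}$. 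Let $G_S$ be the subgraph of $G$ with edge set $\SB e\SM v_e\in S\SE$. Counting neighbours, $N(T)$ is the disjoint union of the $(t+2)|S|$ subdivision vertices adjacent to the $v_e\in S$, the vertex set $V(G_S)$, and $U$, so $|N(T)|=(t+2)|S|+|V(G_S)|+t$ and the deficiency equals $|S|-|V(G_S)|-t$. Positivity gives $|S|\ge|V(G_S)|+t+1$; since also $|S|\le\binom{|V(G_S)|}{2}$, and $t+1=\binom{k}{2}-k$, and $j\mapsto\binom{j}{2}-j$ is strictly increasing for $j\ge 2$, this yields $|V(G_S)|\ge k$ and hence $|S|\ge\binom{k}{2}$. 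Therefore $|S|=\binom{k}{2}$, which forces $|V(G_S)|=k$ and then $G_S=K_k$, so $V(G_S)$ is a $k$-clique of $G$.

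The main obstacle is the converse, and specifically getting the normalisation right: one must notice that it is a $v_e$ \emph{with} a missing partner (not one all of whose partners are present) that can be safely deleted, so that the normalised Hall set is exactly of the form $\bigcup_{v_e\in S}(\{v_e\}\cup\text{partners})$ and hence $|T|$ equals $(t+3)|S|$ on the nose. This exact identity is what lets the size budget $k^\ast$ cap $|S|$ at $\binom{k}{2}$, and it is this tight cap together with the calibration provided by the $t$ vertices of $U$ that forces equality in $|S|\le\binom{|V(G_S)|}{2}$ and thereby recovers a genuine clique rather than merely a dense subgraph.
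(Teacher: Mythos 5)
Your proposal is correct and follows essentially the same route as the paper: the identical construction (subdivision vertices $v_e$, the padding set $U$ of size $t=\binom{k}{2}-k-1$, then a $2$-subdivision, with budget $(t+3)\binom{k}{2}$), the same forward count, and a normalisation of the Hall set into full blocks $\{v_e\}\cup N^2(v_e)$ followed by the same deficiency computation. Your vertex-by-vertex deletion rule is just a finer-grained version of the paper's \textbf{Minimize} rule, and your explicit monotonicity argument and the $k\ge 4$ caveat are welcome elaborations rather than a different method.
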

\begin{proof}
%
%
Let $(G,k)$ be an instance for \CLIQUE. We construct an instance $(G',k')$ for \HS as follows.
Set $t=\binom{k}{2}-k-1$ and $k'=(3+t) \cdot \binom{k}{2}$.
Subdivide each edge $e$ of $G$ by a new vertex $v_e$. Then add a set of $t$ new vertices $U=\{u_1, \dots, u_t\}$,
and add an edge $v_eu$ for each $e\in E$ and $u\in U$. This graph is bipartite with bipartition $(A,B)$ where
$A:=\SB v_e \SM e\in E\SE$ and $B:=V\cup U$. Now, make a 2-subdivision of each edge.
Choose $A'\supseteq A$ and $B'\supseteq B$ so that $(A',B')$ is a bipartition of the vertex set of the resulting graph $G'=(V',E')$.

Suppose $G$ has a clique $C$ of size $k$. Consider the set $S := \SB
v_e\in A \SM e \subseteq C \SE$, i.e., the set of vertices introduced
to subdivide the edges of $C$. Set $S' := S\cup N^2_{G'}(S)$. Then, $S' \subseteq A'$ and $|S'| = (3+t)\binom{k}{2} = k'$.
Moreover, $|N(S')| = |C \cup N_{G'}(S) \cup U| = k+(2+t) \binom{k}{2}+t = k'-1$. Thus, $S'$ is a
Hall set of size $k'$.

\smallskip

On the other hand, suppose $S \subseteq A$ is a Hall Set of $G'$ of size at most $k'$.
Set $S':=S$ and exhaustively apply the following rule.

\begin{description}
\item[Minimize] If there is a vertex $v\in A$ such that $\emptyset \neq S' \cap (\{v\}\cup N^2(v)) \neq \{v\}\cup N^2(v)$,
 then remove $\{v\}\cup N^2(v)$ from $S'$.
\end{description}

To see that the resulting $S'$ is a Hall Set in $G'$, consider a set $S_2$ that is obtained from a Hall Set $S_1$ by one application
of the \textbf{Minimize} rule. Suppose $v\in A$ such that $\emptyset \neq S_1 \cap (\{v\}\cup N^2(v)) \neq \{v\}\cup N^2(v)$ but
$S_2 \cap (\{v\}\cup N^2(v)) = \emptyset$. If $v\notin S_1$, then removing a vertex $u\in N^2(v)$ from $S_1$
decreases $|S_1|$ by one and $|N(S_1)|$ by at least one; namely the vertex in $N(u)\cap N(v)$ disappears from
$N(S_1)$ when removing $v$ from $S_1$. After removing all vertices in $N^2(v)$ from $S_1$, we obtain a set $S_2$
such that $|N(S_1)|-|N(S_2)| \ge |S_1|-|S_2|$. As $|N(S_1)|\le |S_1|-1$, we obtain that $|N(S_2)| \le |S_2|-1$ and therefore,
$S_2$ is a Hall Set of size at most $k'$. On the other hand, if $v\in S_1$, then there is a vertex $u\in N^2(v)\setminus S_1$.
Then, removing $v$ from $S_1$ decreases $|S_1|$ by one and $|N(S_1)|$ by at least one; namely, the vertex in
$N(u)\cap N(v)$ disappears from $N(S_1)$ when removing $v$ from $S_1$. Thus, $S_1\setminus \{v\}$ is a Hall Set
of size at most $k'$ and we can appeal to the previous case with $S_1:=S_1\setminus \{v\}$.

We have now obtained a Hall set $S'$ of size at most $k'$ such that for every $v\in A$, either $S' \cap (\{v\}\cup N^2(v)) = \emptyset$
or $\{v\}\cup N^2(v) \subseteq S'$

As $S' \neq \emptyset$, we have that $U \subseteq N(S')$.
Expressing the size of $S'$ in terms of vertices from $A$ we obtain that
\begin{align*}
|S'|=(3+t) |S'\cap A|.
\end{align*}
Similarly, we express $|N(S')|$, which contains all vertices from $N(S'\cap A)$, all vertices in $U$ and some vertices from $V\cap N^3(S')$.
\begin{align*}
|N(S')| = (2+t) |S'\cap A| + t + |V\cap N^3(S')|.
\end{align*}
Now,
\begin{align*}
|S'| &> |N(S')|\\
(3+t) |S'\cap A| &> (2+t) |S'\cap A| + t + |V\cap N^3(S')|\\
|S'\cap A| - \binom{k}{2} &> |V\cap N^3(S')|-k-1.
\end{align*}
Since there are two vertices in $V\cap N^3(v)$ for each vertex $v\in S'\cap A$, we have $|S'\cap A| \le \binom{|V\cap N^3(S')|}{2}$.
Moreover, $|S'\cap A|\le \frac{k'}{3+t} = \binom{k}{2}$.
Therefore, the previous inequality can only be satisfied if $|S'\cap A| = \binom{k}{2}$ and $|V\cap N^3(S')|=k$.
Then, $|V\cap N^3(S')|$ is a clique of size $k$ in $G$.
\end{proof}

Finally, we rely on the previous lemma to establish
\W [1]-hardness of \LSVC for 2-subdivided graphs.
The reduction will make clear that the \HS problem captures
the essence of the \LSVC problem.

\begin{theorem}\label{thm:hardnessVC}
 \LSVC is {\rm \W[1]}-hard when restricted to 2-subdivided graphs.
\end{theorem}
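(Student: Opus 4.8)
The plan is to reduce from \HS restricted to 2-subdivided graphs, which is \W[1]-hard by Lemma~\ref{lem:h2}. The key insight, made explicit by the phrasing ``the \HS problem captures the essence of the \LSVC problem,'' is the classical connection between small vertex covers and Hall sets via matchings: given a graph $H$ with a vertex cover $S$, an improved vertex cover in the $k$-exchange neighborhood of $S$ corresponds, roughly, to a set $X\subseteq S$ together with a set $Y\subseteq V\setminus S$ such that $(S\setminus X)\cup Y$ still covers every edge, $|Y|<|X|$, and $|X|+|Y|\le k$. The edges not covered by $S\setminus X$ all go into $X$, so $Y$ must dominate $X$ in the bipartite ``link'' graph between $X$ and $N(X)\setminus S$; by König/Hall-type reasoning, such a $Y$ of size $<|X|$ exists within a bounded budget precisely when there is a Hall set. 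So I would engineer an input graph for \LSVC whose optimal local moves are forced to live in one side of a bipartition and to behave exactly like Hall sets.

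Concretely, start from an instance $(G,k)$ of \HS where $G=(A,B,E)$ is 2-subdivided, and build a graph $G^\star$ as follows. Take $G$ itself; its vertex set is bipartite, so $B$ is a vertex cover of $G$. The goal is to set up an initial vertex cover $S$ of $G^\star$ and a budget $k^\star$ so that shrinking $S$ by one within the $k^\star$-exchange neighborhood forces the algorithm to find a set $T\subseteq A$ with $|N_G(T)|<|T|$, swap $N_G(T)$ out of the cover and $T$ into it. The natural choice is $S := B$ (or $S:=N_G(A)$), padded if necessary with gadgetry to block ``cheap'' improvements that do not correspond to Hall sets — e.g.\ attaching pendant structures or making certain vertices effectively undeletable by giving them many private neighbours, so that any size-decreasing swap of total size $\le k^\star$ must swap out a subset $N_G(T)$ of $A$'s neighbourhood and swap in $T$, with $|T|-|N_G(T)|\ge 1$. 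Set $k^\star:=2k$ (or $k + $ the maximum Hall-set size we need), since a Hall set $T$ of size $\le k$ with $|N(T)|\le |T|-1$ yields a swap of total size $|T|+|N(T)|\le 2k-1<k^\star$ and decreases the cover by $|T|-|N(T)|\ge1$. Because $G$ is 2-subdivided, the resulting $G^\star$ can be kept 2-subdivided (or at least 2-degenerate) provided the padding gadgets are chosen to be subdivided paths/pendants — this is where one must be a little careful.

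The two directions: (forward) if $T\subseteq A$ is a Hall set of size $\le k$, then $S' := (B\setminus N_G(T))\cup T$ is still a vertex cover of $G^\star$ — every edge of $G$ has an endpoint in $B$; if that endpoint lies in $N_G(T)$ and is removed, the edge's other endpoint lies in $A$, and since $G$ is 2-subdivided each vertex of $A$ has degree $2$ and... one needs the gadget so that only edges with both their $B$-neighbours available, or the $A$-endpoint in $T$, are covered; the honest statement is that $S'$ covers $G^\star$ because $N_G(T)$ was exactly the set of $B$-vertices all of whose $G$-edges have their $A$-endpoint in $T$. Then $|S'|=|S|-(|N_G(T)|-|T|)<|S|$ and $\dist(S,S')=|T|+|N_G(T)|\le k^\star$, so $S'$ is in the neighbourhood. (Backward) conversely, any vertex cover $S'$ with $|S'|<|S|$ and $\dist(S,S')\le k^\star$ determines $X:=S\setminus S'$ and $Y:=S'\setminus S$ with $|Y|<|X|$, $|X|+|Y|\le k^\star$; the gadgets force $X\subseteq N_G(A)\cap B$ and $Y\subseteq A$, every edge covered only by $X$ must now be covered by $Y$, so $X\subseteq N_G(Y)$ and hence $|N_G(Y)|\le|X|<|X|+ (|X|-|Y|)\le$ \dots\ giving $|N_G(Y)|\le |Y|-1$, i.e.\ $Y$ is a Hall set of size $\le k$.

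The main obstacle is the gadget design in the backward direction: one must prevent ``spurious'' improving swaps — in particular swaps that delete vertices of $A$ from the cover, or that exploit low-degree vertices of the 2-subdivision to shave a vertex without corresponding to any Hall set — while simultaneously keeping the construction 2-subdivided (so that the theorem yields the claimed sparse-graph hardness, strengthening the 3-degenerate bound). I expect the right fix is to route everything through the already-2-subdivided structure of $G$: make the subdivision vertices of $G$ themselves do the work, choose $S$ to consist of exactly the ``middle + $B$-side'' vertices of the subdivided edges so that the only size-$1$-improving moves are the Hall-set swaps, and verify 2-subdividedness of $G^\star$ by exhibiting the pre-subdivision graph explicitly. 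Once the gadget is pinned down, both implications are short counting arguments as sketched, and degeneracy/2-subdividedness is immediate from the construction.
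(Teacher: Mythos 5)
There is a genuine gap, and it comes from choosing the wrong side of the bipartition as the initial vertex cover. Your own first paragraph identifies the right correspondence: by Lemma~\ref{lem:structurallemma}-type reasoning, an improving swap removes a set $X$ \emph{from the current cover} and adds $Y\subseteq N(X)\setminus S$ \emph{from outside it}, so the Hall set must live inside the initial cover. Since \HS asks for a Hall set $T\subseteq A$, the initial cover must be $A$, not $B$. With $S:=A$ the reduction is the identity on the graph: $(G,A,2k-1)$ is the \LSVC instance, and $(A\setminus T)\cup N(T)$ is a vertex cover for free (every edge has an endpoint in $A$; if that endpoint is in $T$, the other endpoint is in $N(T)$). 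The backward direction is equally gadget-free: for an improving cover $C'$ with $T:=A\setminus C'$ trimmed to size at most $k$, bipartiteness gives $N(T)\subseteq C'\cap B$ and $|C'|<|A|$ gives $|C'\cap B|\le|T|-1$, so $T$ is a Hall set. This is exactly the paper's proof; no padding, pendants, or degree gadgets are needed, and $2$-subdividedness is inherited because the graph is unchanged.

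By instead setting $S:=B$ and trying to swap $N_G(T)$ out and $T\subseteq A$ in, you run into the problem you yourself flag: $(B\setminus N_G(T))\cup T$ is generally \emph{not} a vertex cover, because a vertex $b\in N_G(T)$ may have neighbours in $A\setminus T$, and your ``honest statement'' that $N_G(T)$ consists of $B$-vertices \emph{all} of whose neighbours lie in $T$ is false ($N_G(T)$ only requires \emph{some} neighbour in $T$). The proposal then defers to unspecified gadgets to repair both directions while preserving $2$-subdividedness, and that gadgetry is never constructed — so the argument as written does not close. The missing idea is not a gadget but the observation that the \HS instance, taken verbatim with $A$ as the starting cover and budget $2k-1$, already \emph{is} the \LSVC instance.
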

\begin{proof}
%
The proof uses a reduction from \HS restricted to 2-subdivided graphs.
Let $(G,k)$ be an instance for \HS where $G=(A,B,E)$ is a 2-subdivided graph.
The set $A$ is a vertex cover for $G$. Consider $(G,A,k')$, with $k':=2k-1$ as an instance for \LSVC.

Let $S \subseteq A$ be a Hall Set of size at most $k$ for $G$, i.e., $|N(S)|<|S|$. Then, $(A \setminus S) \cup N(S)$ is
a vertex cover for $G$ of size at most $|A|-1$. Moreover, this vertex cover is in the $k'$-exchange neighborhood of
$A$.

On the other hand, let $C$ be a vertex cover in the $k'$-exchange neighborhood of $A$ such that $|C|<|A|$.
Set $C':=C$.
If $|A \setminus C| > k$, then add $|A \setminus C| - k$ vertices from $A\setminus C$ to $C'$. The resulting set $C'$ is also
in the $k'$-exchange neighborhood of $A$ and $|C'|<|A|$. Set $S:=A\setminus C'$. Then, $|S|\le k$.
As $C'$ is a vertex cover, $N(S) \subseteq C'$. But since $C'$ is smaller than $A$, we have that $|C'\cap B|\le |S|-1$
Therefore, $|N(S)|\le |C'\cap B|\le |S|-1$, which shows that $S$ is a Hall Set of size at most $k$ for $G$.
\end{proof}

\section{FPT Algorithm}

In this section we will show that the permissive version of \LSVC is \fpt for a generalization of 2-subdivided graphs.

\pbDefPtask{\PLSVC}{A graph $G$, a vertex cover $S$, and a positive integer
  $k$.}{The integer $k$.}
{Determine that $G$ has no vertex cover $S'$ with $\dist(S,S')\leq k$ and $|S'|<|S|$ or find a vertex cover $S''$ with $|S''|<|S|$.}

The initial algorithm will be randomized, and we will exploit the following pseudo-random object and theorem to derandomize it.
\begin{definition}[\citey{NaorSS95}]
An \emph{$(n,t)$-universal set} $\mathcal F$ is a set of functions from
$\{1,\ldots,n\}$ to $\{0,1\}$, such that for every subset $S\subseteq
\{1,\ldots,n\}$ with $|S|=t$, the set 
$\mathcal F|_{S}=\SB f|_S \SM f\in {\cal F}\SE$ 
is equal to the set $2^S$ of all the functions from $S$ to $\{0,1\}$.
\end{definition}

\begin{theorem}[\citey{NaorSS95}]
\label{propuniversalsets}
There is a deterministic algorithm with running time $O(2^t t^{O(\log t)} n\log n)$ that constructs an $(n,t)$-universal set $\mathcal F$ such that $|\mathcal F|=2^t t^{O(\log t)}\log n$.
\end{theorem}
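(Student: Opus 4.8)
\emph{An easy existential core, and a hard algorithmic shell.}
A uniformly random family $\mathcal{F}$ of $m$ functions $\{1,\dots,n\}\to\{0,1\}$ fails to be an $(n,t)$-universal set only if there is a $t$-set $S$ and a pattern $p\in 2^S$ realized by no member of $\mathcal{F}$ on $S$. A single random function realizes a fixed $p$ on a fixed $S$ with probability $2^{-t}$, so this bad event has probability at most $(1-2^{-t})^m$, and there are at most $\binom{n}{t}2^t\le n^t 2^t$ pairs $(S,p)$; hence $m=O(t\,2^t\log n)$ already makes the union bound work, so a universal set of size $O(t\,2^t\log n)$ --- slightly \emph{smaller} than claimed --- exists. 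The real content of the theorem is therefore to construct one \emph{deterministically} within the stated time, and the $t^{O(\log t)}$ factor, versus the optimal $t$, is the price of derandomization. Note also that the most naive deterministic route --- compose a $t$-perfect hash family $\{1,\dots,n\}\to\{1,\dots,t\}$ with all $2^t$ Boolean functions on $\{1,\dots,t\}$ --- costs $(2e)^t t^{O(\log t)}\log n$, since $t$-perfect hash families carry an unavoidable $e^t$ factor; so shaving the base of the exponent down to $2$ is exactly what must be done.

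\emph{Stage 1: universe reduction.}
First I would build, deterministically and in time $t^{O(1)}n\log n$, an $(n,t)$-perfect hash family $\mathcal{H}$ with codomain of size $t^2$: a set of $t^{O(1)}\log n$ functions $h:\{1,\dots,n\}\to\{1,\dots,t^2\}$ such that every $t$-subset $S$ is mapped injectively by some $h\in\mathcal{H}$. (A random such $h$ is injective on a fixed $S$ with probability $\prod_{i=0}^{t-1}(1-i/t^2)\ge(1-1/t)^t\ge 1/4$, so $O(t\log n)$ functions suffice by a union bound; this is turned into an efficient deterministic construction by the splitter machinery at codomain $t^2$, or by the method of conditional expectations over an almost-$t$-wise-independent sample space.) If $\mathcal{G}$ is \emph{any} $(t^2,t)$-universal set, then $\mathcal{F}:=\SB g\circ h \SM g\in\mathcal{G},\ h\in\mathcal{H}\SE$ is $(n,t)$-universal: given a $t$-set $S$, choose $h\in\mathcal{H}$ injective on $S$; then $h(S)$ is a $t$-subset of $\{1,\dots,t^2\}$ on which $\mathcal{G}$ already realizes every $\{0,1\}$-pattern, hence so does $\mathcal{F}$ on $S$. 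Since $|\mathcal{F}|\le|\mathcal{G}|\cdot|\mathcal{H}|$, it remains to construct a small $\mathcal{G}$ over a ground set of size $\poly(t)$, now with no dependence on $n$ at all.

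\emph{Stage 2: the small-universe construction (the crux).}
On a ground set of size $m=\poly(t)$ I would recurse on $t$: apply a balanced splitter that partitions the ground set so that every $t$-set is cut into blocks each containing at most $\lceil t/2\rceil$ of its elements, obtain universal sets for the blocks recursively, and recombine. The recursion has depth $O(\log t)$, and each level should cost only a $\poly(t)$ factor in both size and running time --- \emph{provided} the recombination is done carefully, which is the main obstacle. Taking the straightforward \emph{product} of the sub-families for the blocks squares the size at every level and blows it up to $t^{\Theta(t)}$; instead one must arrange the recombination so that the size obeys roughly $U(m,t)\le\poly(t)\cdot U(m',\lceil t/2\rceil)$, reusing a single sub-family across the blocks rather than multiplying, so that unrolling yields $\poly(t)^{O(\log t)}=t^{O(\log t)}$. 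The $2^t$ factor is then contributed at the base, where each $t$-set has been isolated into singleton blocks and all $2^t$ Boolean patterns are enumerated there. Each splitter and each recombination step is made deterministic and efficient via small-bias / almost-$k$-wise-independent sample spaces together with conditional expectations; composing with the reduction family $\mathcal{H}$ of Stage 1 gives $|\mathcal{F}|=2^t t^{O(\log t)}\log n$ and total running time $O(2^t t^{O(\log t)}n\log n)$, the $n\log n$ coming from writing $\mathcal{H}$ out over all of $\{1,\dots,n\}$, which recovers the claimed construction.
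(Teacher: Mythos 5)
First, a point of reference: the paper does not prove Theorem~\ref{propuniversalsets} at all --- it is imported as a black box from Naor, Schulman, and Srinivasan (1995) --- so your attempt can only be judged against their construction, not against anything in this paper. With that caveat, your Stage~0 (probabilistic existence of an $(n,t)$-universal set of size $O(t2^t\log n)$) and Stage~1 (deterministic reduction of the universe from $n$ to $t^2$ via a perfect hash family with codomain $t^2$, at multiplicative cost $t^{O(1)}\log n$, followed by composition with a $(t^2,t)$-universal set) are correct and do follow the NSS blueprint.

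The genuine gap is Stage~2, which you rightly call the crux but do not carry out, and whose stated shape cannot be right. You propose a depth-$O(\log t)$ recursion obeying $U(m,t)\le\poly(t)\cdot U(m',\lceil t/2\rceil)$ by ``reusing a single sub-family across the blocks,'' with the entire $2^t$ paid only at the base, ``where each $t$-set has been isolated into singleton blocks and all $2^t$ Boolean patterns are enumerated there.'' Unrolling that recurrence, the isolation step alone would be a family of $t^{O(\log t)}$ functions shattering every $t$-subset of a $\poly(t)$-sized universe into $t$ singleton blocks --- i.e., a $t$-perfect hash family of size $t^{O(\log t)}$. That contradicts the $e^{t}/\poly(t)$ lower bound you yourself invoke in the preamble, and the lower bound already holds for a universe of size $t^2$: a single $h:[m]\to[t]$ is injective on at most $(m/t)^t$ of the $\binom{m}{t}\ge (m-t)^t/t!$ many $t$-sets, forcing at least $t^t(1-t/m)^t/t!=e^{t-O(1)}/\poly(t)$ functions. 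Conversely, the alternative you reject --- taking the product of the two half-families --- does indeed fail, but more badly than ``squaring'': writing $U(t)=2^tc(t)$, the recurrence $c(t)=\poly(t)\,c(t/2)^2$ forces $\log c(t)=\Theta(t)$, i.e., a base strictly larger than $2$, not merely $t^{\Theta(t)}$. So neither branch of your dichotomy yields $2^t t^{O(\log t)}$, and no third mechanism is supplied. The actual NSS argument threads this needle by distributing the $2^t$ multiplicatively over the blocks of a splitter as $\prod_j 2^{t_j}$ with $\sum_j t_j=t$, while the $t^{O(\log t)}$ overhead pays for the splitters, for the bookkeeping over the possible distributions $(t_1,t_2,\dots)$, and for a greedy conditional-expectation derandomization applied only to blocks whose parameters are small enough that enumerating all demands is affordable; making this precise requires their composition lemmas for general $k$-restriction problems. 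As it stands, your Stage~2 accurately describes what must be achieved but does not show that it can be.
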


Our FPT algorithm will take as input a $\beta$-separable graph.

\begin{definition}
For a fixed non-negative integer $\beta$, a graph $G=(V,E)$ is \emph{$\beta$-separable}  
if there exists a bipartition of $V$ into $V_1$ and $V_2$ such that
\begin{itemize}
\item for each $v\in V_1$, $|N(v)\cap V_1|\leq \beta$, and
\item for each $w \in V_2$, $|N(w)|\leq \beta$. 
\end{itemize}
A bipartition of $V$ satisfying these properties is a partition \emph{certifying $\beta$-separability}.
By ${\cal G}(\beta)$ we denote the set of all  $\beta$-separable graphs.
\end{definition}
\begin{remark}
{\rm Observe that a graph of degree at most $d$ is $d$-separable. Similarly every $2$-subdivided graph
is $2$-separable.}
\end{remark}

The following lemma characterizes solutions for \PLSVC that belong to the $k$-exchange neighborhood of $S$.

\begin{lemma}
\label{lem:structurallemma}
Let $G=(V,E)$ be a graph, $S$ be a vertex cover of $G$ and $k$ be a positive integer. Then there exists a 
vertex cover $S'$ such that $|S'|<|S|$ and $\dist(S,S')\leq k$ if and only if there exists a set $S^*\subseteq S$ such that

\smallskip
\noindent
1.~~$S^*$ is an independent set,\\
2.~~$|N(S^*)\setminus S|< |S^*|$, and\\ 
3.~~$|N(S^*)\setminus S|+|S^*|\leq k$. 
\end{lemma}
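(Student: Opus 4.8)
The plan is to prove both directions of the equivalence, treating the statement as a clean combinatorial characterization of which vertex covers in the $k$-exchange neighborhood of $S$ beat $S$.

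\medskip

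\noindent\textbf{Plan of proof.}
First I would handle the ``if'' direction, which is the constructive one. Suppose $S^*\subseteq S$ satisfies conditions 1--3. The natural candidate is $S' := (S\setminus S^*)\cup (N(S^*)\setminus S)$; i.e., we drop the independent set $S^*$ from the cover and add in its outside neighbours. I would check that $S'$ is a vertex cover: any edge not covered by $S\setminus S^*$ must have an endpoint in $S^*$; since $S^*$ is independent (condition~1), the other endpoint lies in $N(S^*)$, and if that endpoint is outside $S$ it lies in $N(S^*)\setminus S\subseteq S'$, while if it is in $S$ it lies in $S\setminus S^*\subseteq S'$ (it cannot be in $S^*$ by independence). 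Then $|S'| = |S| - |S^*| + |N(S^*)\setminus S| < |S|$ by condition~2, and $\dist(S,S') = |S^*| + |N(S^*)\setminus S| \le k$ by condition~3. This gives the desired $S'$.

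\medskip

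For the ``only if'' direction, assume we are given a vertex cover $S'$ with $|S'|<|S|$ and $\dist(S,S')\le k$. I would set $S^* := S\setminus S'$ (the vertices leaving the cover) and $X := S'\setminus S$ (the vertices entering), so $\dist(S,S') = |S^*|+|X|$. The key observations: (i) $S^*$ is an independent set, because $S'$ is a vertex cover and no edge of $G$ can have both endpoints in $S\setminus S'$; (ii) $N(S^*)\setminus S \subseteq X$, because any vertex $u\in N(S^*)$ with $u\notin S$ has a neighbour in $S^*\subseteq V\setminus S'$, so $u$ must be in $S'$ to cover that edge, hence $u\in S'\setminus S = X$; (iii) from $|S'|<|S|$ we get $|X| < |S^*|$, hence $|N(S^*)\setminus S|\le |X| < |S^*|$, which is condition~2; and (iv) $|N(S^*)\setminus S| + |S^*| \le |X| + |S^*| = \dist(S,S') \le k$, which is condition~3. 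Together with (i) this establishes all three conditions.

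\medskip

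\noindent\textbf{Main obstacle.}
The proof is essentially a bookkeeping argument, so there is no deep obstacle; the one place to be careful is the ``only if'' direction, where $S^* = S\setminus S'$ need not equal the full symmetric difference and one must argue that replacing $S'$ by the canonical $(S\setminus S^*)\cup(N(S^*)\setminus S)$ only shrinks the distance and the size --- this is why we phrase condition~2 as a strict inequality on $N(S^*)\setminus S$ rather than on $X$. I expect the write-up to be short, with the verification that $S'$ (resp.\ $S^*$) is a vertex cover (resp.\ independent set) being the only steps needing an explicit edge-by-edge check.
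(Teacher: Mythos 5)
Your proposal is correct and follows essentially the same route as the paper: take $S^*=S\setminus S'$ for the forward direction and $S'=(S\setminus S^*)\cup(N(S^*)\setminus S)$ for the converse. Your write-up is in fact slightly more careful than the paper's (you only claim the inclusion $N(S^*)\setminus S\subseteq S'\setminus S$, where the paper asserts equality, and you spell out the vertex-cover check the paper dismisses as easy); no changes needed.
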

\begin{proof}
We first show the forward direction of the proof. Let $S^*=S\setminus S'$. Since $I=V\setminus S'$ is an independent set  
and $S^*\subseteq I $ we have that $S^*$ is an independent set.  Furthermore, since $S^*$ is in $I$ we have that $N(S^*)\subseteq S'$ 
and in particular $N(S^*)\setminus S$ is the set of vertices that are present in $S'$ but not in $S$. Since $|S'|<|S|$ we have that $|N(S^*)\setminus S|< |S^*|$ and by the fact that $\dist(S,S')\leq k$ we have that $|N(S^*)\setminus S|+|S^*|\leq k$. For the reverse direction it is easy to see that 
$(S\setminus S^*) \cup (N(S^*)\setminus S)$ is the desired $S'$. This completes the proof. 
\end{proof}

To obtain the FPT algorithm for  \PLSVC  on ${\cal G}(\beta)$ we will use Lemma~\ref{lem:structurallemma}. More 
precisely, our strategy is to obtain an FPT algorithm for  finding a subset $Q\subseteq S$ such that $Q$ is an independent set and 
$S^* \subseteq Q $. Here, $S^*$ is as described in Lemma~\ref{lem:structurallemma}. Thus, our main technical lemma is 
the following.

\begin{lemma}
\label{lem:mainlemma}
Let $\beta$ be a fixed non-negative integer.
Let $G$ be a $\beta$-separable graph, $S$ be a vertex cover of $G$ and $k$ be a positive integer.
There is a $O(2^q q^{O(\log q)} n \log n)$ time algorithm finding a family ${\cal Q}$ of 
subsets of $S$ such that (a) $|{\cal Q}|\leq 2^q q^{O(\log q)} \log n$, (b) each $Q\in {\cal Q}$ is an independent set, and (c) 
if there exists a $S^*$ as
described in Lemma~\ref{lem:structurallemma}, then there exists a $Q\in \cal Q$ such that $S^* \subseteq Q $. Here, $q=k+\beta k$.
\end{lemma}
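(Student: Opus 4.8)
The plan is to combine Lemma~\ref{lem:structurallemma} with the universal-set machinery of Theorem~\ref{propuniversalsets}. The key observation is that the set $S^*$ we are hunting for, together with its ``outside neighborhood'' $N(S^*)\setminus S$, has bounded size: by condition~3 of Lemma~\ref{lem:structurallemma} we have $|S^*|+|N(S^*)\setminus S|\le k$. The difficulty is that $S^*$ lives inside $S$, which may be arbitrarily large, so we cannot simply enumerate candidates of size $k$ in $S$ directly without an $n^{O(k)}$ blow-up. Instead I would use a universal set to ``guess'' which vertices of $S$ belong to $S^*$ and which belong to its neighborhood.

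First I would bound the relevant footprint. Fix a partition $(V_1,V_2)$ certifying $\beta$-separability, which can be computed (or is given) for $G\in\mathcal G(\beta)$. Suppose $S^*$ as in Lemma~\ref{lem:structurallemma} exists. Let $T := S^* \cup (N(S^*)\setminus S)$; then $|T|\le k$. Now consider the vertices of $S$ that are ``dangerous'' for building an independent superset $Q$ of $S^*$: these are vertices of $S$ adjacent to some vertex of $S^*$. Each $v\in S^*$ has at most $\beta$ neighbors inside $V_1$ (of which the ones in $S$ are the relevant obstruction) plus its neighbors in $V_2$; but vertices in $V_2$ have degree at most $\beta$, so each vertex of $V_2\cap N(S^*)$ contributes at most $\beta$ further vertices of $S$. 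Hence the set $B(S^*) := N(N[S^*])\cap S$ of vertices of $S$ that we must \emph{exclude} from $Q$ in order to keep $Q$ independent while containing $S^*$ has size at most $|S^*|\cdot\beta + |N(S^*)\cap V_2|\cdot\beta \le k\beta + k\beta$; being a little more careful one gets the bound $q=k+\beta k$ claimed in the statement for the quantity that actually needs to be ``hit'' by the universal set, counting $S^*$ itself (size $\le k$) plus one blocking vertex per forbidden adjacency ($\le \beta k$). The precise bookkeeping here is the main obstacle: I must argue that there is a set $Z\subseteq S$ with $|Z|\le q$ such that $S^*\subseteq Z$ and such that ``$S\setminus Z$ contains no neighbor of any vertex of $S^*$'' — i.e., knowing the $2$-coloring of $Z$ (which elements are in $S^*$) is enough to reconstruct $Q$.

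Next I would invoke Theorem~\ref{propuniversalsets} with $n$ the number of vertices of $G$ (indexing $V$ by $\{1,\dots,n\}$) and $t := q$ to build an $(n,q)$-universal set $\mathcal F$ of size $2^q q^{O(\log q)}\log n$ in time $O(2^q q^{O(\log q)} n\log n)$. For each $f\in\mathcal F$ I would define a candidate set: let $W_f := \{v\in S : f(v)=1\}$ be the ``guessed non-$S^*$'' vertices of $S$, and set $Q_f := \{v \in S \setminus W_f : N(v)\cap(S\setminus W_f)=\emptyset\}$ — i.e., among the vertices of $S$ not killed by $f$, keep exactly those that are independent from all other surviving vertices of $S$. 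Put $\mathcal Q := \{Q_f : f\in\mathcal F\}$. Then each $Q_f$ is an independent set by construction, giving~(b), and $|\mathcal Q|\le|\mathcal F|\le 2^q q^{O(\log q)}\log n$, giving~(a). For~(c): if $S^*$ exists, let $Z\subseteq S$, $|Z|\le q$, be the set identified above (namely $S^*$ together with, for each vertex of $S^*$, a set of ``blocking witnesses'' in $S$ — actually it is cleaner to take $Z := S^* \cup (B(S^*)\cap S)$ and note this has size $\le q$). Since $\mathcal F$ is $(n,q)$-universal, some $f\in\mathcal F$ realizes on $Z$ the pattern $f|_Z \equiv 0$ on $S^*$ and $\equiv 1$ on $Z\setminus S^*$. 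For this $f$: every vertex of $S^*$ survives into $S\setminus W_f$; moreover every vertex of $S\setminus W_f$ that is adjacent to some vertex of $S^*$ lies in $B(S^*)\subseteq Z\setminus S^*$, so it was assigned $1$ by $f$ and hence is \emph{not} in $S\setminus W_f$ — contradiction, so no surviving vertex is adjacent to $S^*$. Therefore every vertex of $S^*$ is, within $S\setminus W_f$, isolated from the rest, so $S^*\subseteq Q_f$, establishing~(c).

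Finally the running time: constructing $\mathcal F$ costs $O(2^q q^{O(\log q)} n\log n)$ by Theorem~\ref{propuniversalsets}, and for each of the $2^q q^{O(\log q)}\log n$ functions computing $Q_f$ takes linear time (one scan to build $W_f$, one scan over edges to test the independence condition — and since $G$ is $\beta$-separable it has $O(\beta n)$ edges, so this is $O(n)$ for fixed $\beta$), giving total time $O(2^q q^{O(\log q)} n\log n)$ as claimed. The one place demanding real care, and the step I expect to be the main obstacle, is pinning down the exact definition of the ``hitting set'' $Z$ and verifying $|Z|\le q = k+\beta k$ using both clauses of $\beta$-separability — in particular handling the asymmetry between $V_1$ (degree $\le\beta$ only toward $V_1$) and $V_2$ (degree $\le\beta$ everywhere), and making sure the blocking witnesses for different vertices of $S^*$ are counted without double-charging beyond the $\beta k$ budget.
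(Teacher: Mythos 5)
Your overall architecture (color $S$ via an $(n,q)$-universal set, prune the $0$-colored vertices down to an independent set $Q_f$, and argue that some $f$ preserves $S^*$) is the same as the paper's, but the pruning rule you chose breaks property (c), and it breaks exactly at the spot you yourself flagged as ``the main obstacle.'' With $Q_f=\{v\in S\setminus W_f: N(v)\cap(S\setminus W_f)=\emptyset\}$, a vertex $v\in S^*$ survives only if \emph{every} neighbor of $v$ in $S$ is colored $1$. Hence your hitting set $Z$ must contain all of $N(S^*)\cap S$, and this set is \emph{not} bounded by $q=k+\beta k$: $\beta$-separability bounds the $V_1$-degree of a $V_1$-vertex and the total degree of a $V_2$-vertex, but a vertex of $S^*\cap V_1$ may have arbitrarily many neighbors in $V_2\cap S$, and none of these is counted by condition~3 of Lemma~\ref{lem:structurallemma}, which only controls $N(S^*)\setminus S$. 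Your size estimate silently assumes $|N(S^*)\cap V_2|\le k$, which is false: take $v,v'\in V_1\cap S$ both adjacent to a single vertex $u\notin S$, with $v$ additionally adjacent to $m$ vertices of $V_2\cap S$; then $S^*=\{v,v'\}$ satisfies the conditions of Lemma~\ref{lem:structurallemma}, yet $|N(S^*)\cap S|\ge m$, and an $(n,q)$-universal set cannot force all $m$ of these vertices to receive color $1$. (A secondary slip: as written, $B(S^*)=N(N[S^*])\cap S$ is the second neighborhood of $S^*$, not the set of neighbors of $S^*$ inside $S$.)

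The paper escapes this with an asymmetric two-stage pruning that your proposal is missing. Let $C_0$ be the $0$-colored vertices of $S$; first delete from $C_0$ every \emph{$V_2$-vertex} having a neighbor inside $C_0$, and only then delete the endpoints of all edges remaining in the induced graph. The coloring then only needs to be controlled on $A=(N[S_1]\cap (V_1\cap S))\cup(N[S_2]\cap S)$, where $S_i=S^*\cap V_i$, and this set genuinely has size at most $(1+\beta)k$ (at most $\beta$ $V_1$-neighbors per vertex of $S_1$, at most $\beta$ neighbors in total per vertex of $S_2$). The unboundedly many $V_2\cap S$ neighbors of $S_1$ need not be colored $1$: if such a vertex is colored $0$, it is adjacent to a vertex of $S_1\subseteq C_0$ and is removed by the first stage; and the first stage never removes a vertex of $S_2$, because every neighbor of $S_2$ in $S$ lies in $A\setminus S^*$ and is therefore colored $1$. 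Repairing your proof essentially requires adopting this two-stage rule (or some equivalent exploitation of the $V_1/V_2$ asymmetry); the remaining ingredients of your argument --- the universal set, the probabilistic intuition and derandomization, and the running-time accounting --- then go through as you describe.
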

We postpone the proof of Lemma~\ref{lem:mainlemma} and first give the main result that uses Lemma~\ref{lem:mainlemma} crucially. 

\begin{theorem}
\label{thm:fptplsvc}
Let $\beta$ be a fixed non-negative integer. 
\PLSVC  is FPT on ${\cal G}(\beta)$ with an algorithm running in time $2^q q^{O(\log q)}n^{O(1)}$, where $q=k+\beta k$. 
\end{theorem}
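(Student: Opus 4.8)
The plan is to combine Lemma~\ref{lem:structurallemma} with Lemma~\ref{lem:mainlemma} in a direct way. First I would invoke Lemma~\ref{lem:mainlemma} on the input $(G,S,k)$, which runs in time $2^q q^{O(\log q)} n\log n$ and produces a family $\cal Q$ of independent subsets of $S$ with $|{\cal Q}|\leq 2^q q^{O(\log q)}\log n$, having the covering property: if a set $S^*$ as in Lemma~\ref{lem:structurallemma} exists, then some $Q\in\cal Q$ contains it. For each $Q\in\cal Q$ I then need to decide whether $Q$ contains a subset $S^*$ satisfying conditions (1)--(3) of Lemma~\ref{lem:structurallemma}, and this is where the \HS-type structure reappears: since $Q$ is already independent, $Q$ is itself an independent set, and I want the sub-collection $S^*\subseteq Q$ minimizing $|N(S^*)\setminus S|-|S^*|$ subject to $|N(S^*)\setminus S|+|S^*|\leq k$. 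Condition~(3) with $|S^*|\geq 1$ already forces $|S^*|\leq k$ and $|N(S^*)\setminus S|\leq k$, so $|Q|$ is effectively bounded by $q$ up to the already-guaranteed independence; in fact I can restrict attention to $Q$ of size at most $k$, and brute-force over all subsets of such a $Q$ in time $2^{|Q|}\cdot\poly(n)=2^{O(q)}\poly(n)$ per $Q$ — or, more cleanly, observe that $N(\cdot)\setminus S$ is a monotone set function and a greedy/exhaustive argument over the $2^q q^{O(\log q)}\log n$ members of $\cal Q$ already suffices because the minimal valid $S^*$ is contained in some $Q$ of size at most $q$.

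More carefully, here is the algorithm I would state. Run Lemma~\ref{lem:mainlemma} to get $\cal Q$. For each $Q\in\cal Q$ with $|Q|\leq k$, enumerate all nonempty subsets $T\subseteq Q$, and for each check in polynomial time whether $|N(T)\setminus S|<|T|$ and $|N(T)\setminus S|+|T|\leq k$; if such a $T$ is found, output the vertex cover $S'':=(S\setminus T)\cup(N(T)\setminus S)$, which by the reverse direction of Lemma~\ref{lem:structurallemma} (applied with $S^*=T$) is a vertex cover with $|S''|<|S|$, and halt. If no $Q$ yields such a $T$, report that $G$ has no vertex cover $S'$ with $\dist(S,S')\leq k$ and $|S'|<|S|$. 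Correctness of the negative answer is exactly the forward direction of Lemma~\ref{lem:structurallemma} together with property~(c) of Lemma~\ref{lem:mainlemma}: if some improving $S'$ in the $k$-exchange neighborhood existed, then the associated $S^*$ would be independent, of size at most $k$, and contained in some $Q\in\cal Q$; since $Q$ is independent, $Q$ has size at most\ldots well, I should be slightly careful here — $Q$ itself need not have size $\leq k$, but $S^*\subseteq Q$ does, so to make the brute force legitimate I would instead, for each $Q\in\cal Q$, only examine subsets $T\subseteq Q$ with $|T|\leq k$ and $|N(T)\setminus S|\leq k$, of which there are at most $\binom{|Q|}{\leq k}\leq |Q|^k$; that is not FPT in $|Q|$, so the cleaner route is to argue $S^*$ lies in $\cal Q$ directly.

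Actually the cleanest formulation — and the one I would write — uses the covering guarantee more sharply: I claim we may additionally assume every $Q\in\cal Q$ satisfies $|Q|\leq q$, since the construction in Lemma~\ref{lem:mainlemma} is built from $(n,q)$-universal sets and each surviving candidate set naturally has size at most $q=k+\beta k$ (a minimal $S^*$ has $|S^*|\leq k$ and, by $\beta$-separability, $|N(S^*)\cap S|\leq\beta|S^*|\leq\beta k$, so the relevant window inside $S$ has size at most $q$). Granting this, brute-forcing all $2^{|Q|}\leq 2^q$ subsets of each $Q$ costs $2^q\poly(n)$ per member, and $|{\cal Q}|\leq 2^q q^{O(\log q)}\log n$, for a total of $2^{2q} q^{O(\log q)}n^{O(1)}=2^{O(q)} q^{O(\log q)} n^{O(1)}$, which I would absorb into the stated bound $2^q q^{O(\log q)} n^{O(1)}$ by noting that the exponent constant can be folded in, or by the more careful observation that for each $Q$ we only need to test subsets $T$ with $|T|+|N(T)\setminus S|\leq k$, a polynomial-time checkable monotone condition, so in fact we can find the best $T$ inside a fixed $Q$ without paying $2^{|Q|}$ — but the crude $2^{|Q|}$ bound already gives FPT. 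The main obstacle, and the only real subtlety, is precisely this bookkeeping: ensuring that the size of the sets we brute-force over is controlled by $q$ rather than by $n$, which is exactly what property~(c) of Lemma~\ref{lem:mainlemma} together with the $\beta$-separability bound on $|N(S^*)\cap S|$ is designed to deliver. Everything else is immediate from Lemma~\ref{lem:structurallemma}.
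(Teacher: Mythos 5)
There is a genuine gap, and it is exactly at the point you flag as ``the only real subtlety.'' Your algorithm needs, for each $Q\in\mathcal{Q}$, to decide whether $Q$ contains a deficient subset, and you make this FPT by asserting that one may assume $|Q|\leq q$ because ``each surviving candidate set naturally has size at most $q$.'' That is false for the construction in Lemma~\ref{lem:mainlemma}: there, $Q(f)$ is obtained by pruning $C_0=\SB v\in S \SM f(v)=0\SE$, the set of \emph{all} vertices of $S$ coloured $0$, which is typically a constant fraction of $|S|$; the $(n,q)$-universal set controls the restriction of $f$ to every $q$-subset of $S$, not the size of $f^{-1}(0)$. Nothing in the statement of Lemma~\ref{lem:mainlemma} bounds $|Q|$, and truncating $Q$ to $q$ elements would destroy the covering property~(c), since you do not know which elements of $Q$ form $S^*$. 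Consequently both of your enumeration schemes fail: $2^{|Q|}$ is not FPT, and $\binom{|Q|}{\leq k}$ is only $n^{O(k)}$, as you yourself observe. The vague fallback (``a greedy/exhaustive argument over a monotone set function'') is not an algorithm.

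The paper closes this gap with an idea absent from your proposal: for each $Q$, form the bipartite graph $G[Q\cup I]$ with $I=V\setminus S$ and test in \emph{polynomial time}, via Hall's condition and maximum bipartite matching, whether some $W\subseteq Q$ has $|N(W)|<|W|$; a violating (deficient) set can be extracted from a non-saturating maximum matching. This works for $Q$ of arbitrary size. Note also that this is precisely where permissiveness is used: the set $W$ returned by the matching argument need not satisfy $|W|+|N(W)|\leq k$, so the improved cover $(S\setminus W)\cup N(W)$ may lie outside the $k$-exchange neighborhood --- which is allowed for \PLSVC. Your restriction to subsets $T$ with $|T|+|N(T)\setminus S|\leq k$ is an unnecessary tightening that, combined with the unbounded size of $Q$, is what forces you into the untenable enumeration. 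The correctness of the negative answer in your write-up (via the forward direction of Lemma~\ref{lem:structurallemma} and property~(c)) is fine; it is only the per-$Q$ search that needs the matching-based test.
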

\begin{proof}
Let $G$ be the input graph from ${\cal G}(\beta)$, $S$ be a vertex cover of $G$, and $k$ be a positive integer. 
Fix $q=k+\beta k$ and $I=V\setminus S$. We first apply Lemma~\ref{lem:mainlemma} and obtain a family ${\cal Q}$ of 
subsets of $S$ such that (a) $|{\cal Q}|\leq 2^q q^{O(\log q)} \log n$ and (b) each $Q\in {\cal Q}$ is an independent set. The family $Q$ has the additional 
property that if there exists a set $S^*$ as described in Lemma~\ref{lem:structurallemma}, then 
there exists a $Q\in \cal Q$ such that $S^* \subseteq Q $. 

For every $Q\in \cal Q$, the algorithm proceeds as follows. Consider the bipartite graph $G[Q\cup I]$. Now in polynomial time
check whether there exists a subset $W\subseteq Q$ such that $|N(W)|<|W|$ in $G[Q\cup I]$. This is done by checking
Halls' condition that says that  there exists a matching saturating $Q$ if and only if for all $A\subseteq Q$, $|N(A)|\geq |A|$. 
A polynomial time algorithm that finds a maximum matching in a bipartite graph can be used to find a violating set $A$ if there exists one. 
See~\citex{Kozen91} for more details.  Returning to our algorithm, if we find such set 
$W$ then we return $S'=(S\setminus W) \cup N(W)$. Clearly, $S'$ is a vertex cover and $|S'|<|S|$. 
Now we argue that if for every $Q\in \cal Q$ we do not obtain the desired $W$, then 
there is no vertex cover  $S'$ such that $|S'|<|S|$ and $\dist(S,S')\leq k$. However, this is guaranteed by the fact that 
if there would exist such a set $S'$, then by Lemma~\ref{lem:structurallemma} there exist a desired $S^*$. Thus, when we consider 
the set $Q\in \cal Q$ such that $S^*\subseteq Q$ then we would have found a $W\subseteq Q$ such that $|N(W)|<|W|$ in $G[Q\cup I]$. 
This proves the correctness of the algorithm. The running time of the algorithm is governed by the size of the family $\cal Q$.  
This completes the proof.
\end{proof}

To complete the proof of Theorem~\ref{thm:fptplsvc}, the only remaining component is a proof of Lemma~\ref{lem:mainlemma} which 
we give below.
\begin{proof}[Proof of Lemma~\ref{lem:mainlemma}]
Let $G$ be a $\beta$-separable graph, $S$ be a vertex cover of $G$, and $k$ be a positive integer. 
By the proof of Lemma~\ref{lem:structurallemma} we know that if there exists a vertex cover
$S'$ such that $|S'|<|S|$ and $\dist(S,S')\leq k$ then there exists a set
$S^*\subseteq S$ such that 

\medskip
\noindent
1.~~$S^*$ is an independent set,\\
2.~~$|N(S^*)\setminus S|< |S^*|$, and\\
3.~~$|N(S^*)\setminus S|+|S^*|\leq k$.
\medskip

We first give a randomized procedure that produces a family $\cal Q$ satisfying the properties of the lemma with high probability.
In a second stage, we will derandomize
it using 
universal sets.
For our argument we fix one such $S^*$ and let $V_1$ and $V_2$ be a partition certifying $\beta$-separability of $G$. 
Let $S_1=S^*\cap V_1$ and $S_2=S^*\cap V_2$. 
Since $G$ is a $\beta$-separable graph, we have that $|N[S_1]\cap (V_1\cap S)|+|N[S_2]\cap S| \leq \beta |S_1| + |S_1| + \beta |S_2| + |S_2| \leq k+\beta k$. We also know that 
$|S^*|\leq k$. Let $q=k+\beta k$ and $A=(N[S_1]\cap (V_1\cap S))\cup (N[S_2]\cap S)$.  
Now, uniformly at random color the vertices of $S$ with $\{0,1\}$, that is, color each vertex of $S$ 
with $0$ with probability $\frac{1}{2}$ and with $1$ otherwise. Call this coloring $f$. 
The probability that for all $x \in S^*$, $f(x)=0$ and for all $y\in (A\setminus S^*)$, $f(y)=1$, is 
\[ \frac{1}{2^{|A|}} \geq  \frac{1}{2^q}.\] 
Given the random coloring $f$ we obtain a set $Q(f)\subseteq S$ with the following properties 
\begin{itemize}
\item  $Q(f)$ is an independent set; and 
\item  with probability at least $2^{-q}$, $S^*\subseteq Q(f)$. 
\end{itemize}
We obtain the set $Q(f)$ as follows. 
\begin{quote}
Let $C_0=\SB v \SM v\in S,~f(v)=0\SE$, that is, $C_0$ contains all the vertices of 
$S$ that have been assigned $0$ by $f$.  
Let $C_0^1 \subseteq C_0\cap V_2$ be the set of vertices that have degree at least 1
in $G[C_0]$. Let $C_0':=C_0\setminus C_0^1$. 
Let $E_0'$ be the set of edged in the induced 
graph $G[C_0']$ and $V(E_0')$ be the set of end-points of the edges in $E_0'$. 
Define $Q(f):=C_0'\setminus V(E_0')$. 
\end{quote}


By the procedure it is clear that $Q(f)$ is an independent set. However, note that it is possible that $Q(f)=\emptyset$. 
Now we show that with probability at least $e^{-q}$, $S^*\subseteq
Q(f)$. Let $C_i=\SB v \SM v\in S,~f(v)=i\SE$, $i\in \{0,1\}$.
By the probability computation above we know that with probability at least $e^{-q}$, $S^*\subseteq C_0$ and 
$A\setminus S^* \subseteq C_1$. Now we will show that the procedure that prunes $C_0$ and obtains $Q(f)$ does not remove 
any vertices of $S^*$. All the vertices in the set $N(S_2)\cap S$ are contained in $C_1$  and thus there are no edges incident 
to any vertex in $S_2$ in $G[C_0]$. Therefore the only other possibility is that we could remove vertices of 
$S_1\cap C_0$. However, to do so there must be an edge between a vertex in $S_1$ and a vertex in $V_1\cap S$, 
but we know that 
all such neighbors of vertices of $S_1$ are in $C_1$. This shows that with probability at least $2^{-q}$, 
$S^*\subseteq Q(f)$. 


We can boost the success probability of the above random procedure to a constant, by independently repeating the  procedure 
$2^q$ times. Let the random functions obtained while repeating the above procedure 
be $f_{j}$, $j\in \{1,\ldots,2^q\}$ and let $Q(f_j)$ denote the corresponding set obtained after applying the above pruning 
procedure. The probability that  one of the $Q(f_j)$ contains $S^{*}$ is at least 
\[1-\left(1-\frac{1}{2^q}\right)^{2^q} \geq 1-\frac{1}{e}\geq \frac{1}{2}.\]
Thus we obtain a collection $\cal Q$ of subsets of $S$ with the following properties. 
\begin{itemize}
 \item $|{\cal Q}| \leq 2^q$, where ${\cal Q}=\SB Q(f_j) \SM j\in\{1,\ldots,2^q\}\SE$,
\item  every set $Q\in \cal Q$ is an independent set, and 
\item  with probability at least $\frac{1}{2}$, there exists a set $Q\in \cal Q$ such that $S^*\subseteq Q$. 
\end{itemize}

Finally, to derandomize the above procedure we will use Theorem~\ref{propuniversalsets}. We first 
compute a $(|S|,q)$-universal set $\cal F$ with the algorithm described in Theorem~\ref{propuniversalsets} in time 
$O(2^q q^{O(\log q)} |S|\log |S|)$ of size $2^q q^{O(\log q)}\log |S|$. Now every function $f\in {\cal F}$ can be thought of 
as a function from $S$ to $\{0,1\}$. Given this $f$ we obtain $Q(f)$ as described above. Let 
${\cal Q}=\SB Q(f) \SM f\in {\cal F}\SE$. Clearly, $|{\cal Q}|\leq 2^q q^{O(\log q)}\log n$. Now if there exists a set 
$S^*$ of the desired type then the $Q(f)$ corresponding to the function $f\in {\cal F}$, that assigns $0$ to 
every vertex in $S^*$ and $1$ to every vertex in $A\setminus S^*$, has the property that $S^*\subseteq Q(f)$ and $Q(f)$ is an independent set.  
This completes the proof. 
\end{proof}

It is easily seen that finding minimum vertex cover of a 2-subdivided graph is \NP-hard. Indeed, it follows from the \NP-hardness of the \textsc{Vertex Cover} problem on general graphs since: if $G'$ is a 2-subdivision of a graph $G$ with $m$ edges, then $G$ has a vertex
cover of size at most $k$ \myiff $G'$ has a vertex cover of size at most $k+m$.

Thus, Theorems~\ref{thm:hardnessVC} and \ref{thm:fptplsvc} together resolve a question raised by (Krokhin and Marx), who asked for a problem where finding the optimum is hard, 
strict local search is hard, but permissive local search is FPT.

\section{Conclusion}
In this paper we have shown that from the parameterized complexity point of view,
permissive Local Search is  indeed more powerful than the strict Local Search and thus may 
be more desirable.  We have demonstrated this on one example, namely {\sc Vertex Cover},  but it would
be interesting to find a broader set of problems 
where the complexity status 
of the strict and permissive versions of local search differ.  We believe that the results in 
this paper have opened up a complete new direction of research in the domain of parameterized local search, 
which is still in nascent stage. It would be interesting to undertake a similar study for {\sc Feedback Vertex Set}, 
even on planar graphs.

\subsubsection*{Acknowledgments}

All authors acknowledge support from the OeAD 
(Austrian Indian collaboration grant, IN13/2011).  
Serge Gaspers, Sebastian Ordyniak, and Stefan Szeider
acknowledge support from the European Research Council (COMPLEX REASON, 239962).
Serge Gaspers acknowledges support from the
Australian Research Council (DE120101761).
Eun Jung Kim acknowledges support from
the ANR project AGAPE (ANR-09-BLAN-0159).

{

}

%

\end{document}